\newtheorem{Theorem}{Theorem}
\newtheorem{Example}[Theorem]{Example}
\newtheorem{Remark}[Theorem]{Remark}
\newenvironment{proof}[1][Proof]{\noindent\textbf{#1.} }{\ \rule{0.5em}{0.5em}}
\title{Optimal Thermodynamic Processes For Gases}
\author[2,3,\thanks{\textit{E-mail: }\texttt{kushnera@mail.ru}}]{Alexei Kushner}
\author[1,\thanks{\textit{E-mail: }\texttt{valentin.lychagin@uit.no}}]{Valentin Lychagin}
\author[1,2,\thanks{\textit{E-mail: }\texttt{mihail\underline{ }roop@mail.ru}}]{Mikhail Roop}
\affil[1]{V.A. Trapeznikov Institute of Control Sciences, Russian Academy of Sciences, 65 Profsoyuznaya Str., 117997 Moscow, Russia}
\affil[2]{Faculty of Physics, Lomonosov Moscow State University, Leninskie Gory, 119991 Moscow, Russia}
\affil[3]{Moscow Pedagogical State University, 1/1 M. Pirogovskaya Str., 119991 Moscow, Russia}
\begin{document}
\maketitle

\abstract{
In this paper, we consider an optimal control problem in equilibrium thermodynamics of gases. Thermodynamic state of the gas is given by a Legendrian submanifold in a contact thermodynamic space. Using Pontryagin's maximum principle we find a thermodynamic process on this submanifold such that the gas maximizes the work functional. For ideal gases, this problem is shown to be integrable in Liouville's sense and its solution is given by means of action-angle variables. For real gases considered as a perturbation of ideal ones, the integrals are given asymptotically.
}

\section{Introduction}
The problem of optimal control of thermodynamic processes has been of wide interest since the 19th century when a classical work of Carnot \cite{Carn} paved the way for investigation of optimal heat engines. A number of works is devoted to constructing heat engines with maximal efficiency in case of linear heat transfer laws (see \cite{CurAl,Rub}). In \cite{Rub}, the problem of optimal control was investigated by means of Pontryagin's maximum principle \cite{PBGM}. In a relatively recent series of works \cite{RosCir}, a non-equilibrium thermodynamic system is presented as a union of equilibrium subsystems with linear heat transfer laws between each pair of subsystems and a work of such system is maximized. Volumes of subsystems are considered as control parameters, while state variables are entropies of subsystems.

In the present work, we formulate thermodynamics as a theory of measurement of random vectors, namely extensive variables. This observation leads us to the definition of thermodynamic states as Legendrian and Lagrangian manifolds. This approach goes back to classical work \cite{Gibbs} and is also reflected in papers \cite{Mrug,Rup}. Legendrian and Lagrangian manifolds are equipped with Riemannian structures and one of distinguishing points of this work is an observation that these structures naturally appear in measurement. This geometrical representation of thermodynamic states allows us to use Pontryagin's maximum principle to find optimal thermodynamic process maximizing the work functional. One of the main results of this paper is that a Hamiltonian system turns out to be integrable in Liouville's sense and we provide its exact solution. We also consider the case of real gases in virial approximation and provide commuting up to linear terms of virial expansion integrals of the Hamiltonian system for real gases.

The paper is organized as follows. In Sect. 2, we show relations between thermodynamics and measurement of random vectors. In Sect. 3, we describe Legendrian manifolds and geometric structures on them for gases in the form convenient for further optimal control problem statement. In Sect. 4, we state and solve the optimal control problem for ideal gases and construct asymptotics of commuting integrals for real ones.
%%%%%%%%%%%%%%%%%%%%%%%%%%%%%%%%%%%%%%%%%%
\section{Measurement and Thermodynamics}
In this section, we briefly describe a link between thermodynamics and measurement of random vectors. Namely, we show that thermodynamics can be seen as a measurement theory of extensive variables. Moreover, such a consideration leads to the notion of Legendrian manifolds representing any thermodynamic state and various geometric structures on it, in particular, Riemannian structures responsible for applicability conditions for state equations. These structures, as we shall see below, play a crucial role in control problems on Legendrian manifolds. More comprehensive discussion can be found in \cite{Lych} and references therein.

\subsection{Minimal Information Gain Principle}
Let $(\Omega,\mathcal{A},p)$ be a discrete probability space, i.e. $\Omega=\{\omega_{1},\ldots,\omega_{k}\}$ is a set of elementary events, $\mathcal{A}$ is a $\sigma$-algebra on $\Omega$ and $p$ is a probability measure, $p=\{p_{1},\ldots p_{k}\}$, where $p_{i}=p(\omega_{i})$. Let $q=\{q_{1},\ldots,q_{k}\}$ be another probability measure equivalent to $p$. It means that measures $p$ and $q$ have the same zero measure sets. Introduce the \textit{surprise function} as a random variable $s_{p}\colon\mathcal{A}\to\mathbb{R}$ by determining its values on elementary outcomes as follows:

\begin{equation}
\label{surprize}
s_{p}(\omega_{i})=-\ln p_{i},\quad i=\overline{1,k}.
\end{equation}

Due to (\ref{surprize}), we have relations $s_{p}(\Omega)=0$, $s_{p}(\emptyset)=+\infty$, therefore the notion ``surprise'' is justified.

The average $S(p)$ of the surprise function $s_{p}$ with respect to the measure $p$ is

\begin{equation}
\label{aver}
S(p)=-\sum\limits_{i=1}^{k}p_{i}\ln p_{i}.
\end{equation}

Note that formula (\ref{aver}) coincides with the Shannon's definition of entropy. If we change measure $p$ to measure $q$, then we get the changing of the surprise function:

\begin{equation*}
s(p,q) = s_{q}-s_{p}=\ln\left(\frac{p_{i}}{q_{i}}\right),
\end{equation*}

and therefore the average of $s(p,q)$ with respect to measure $p$ called \textit{information gain} is

\begin{equation}
\label{infgain}
I(p,q)=\sum\limits_{i=1}^{k}p_{i}\ln\left(\frac{p_{i}}{q_{i}}\right).
\end{equation}

Generalization of (\ref{infgain}) on the case of arbitrary probability space $(\Omega,\mathcal{A},p)$ is of the form

\begin{equation}
\label{gainarb}
I(p,q)=\int\limits_{\Omega}\ln\left(\frac{dp}{dq}\right)dp,
\end{equation}

and if $dp=\rho dq$, where $\rho$ is the density, then formula (\ref{gainarb}) takes the form

\begin{equation*}
I(\rho)=\int\limits_{\Omega}\rho\ln\rho dq.
\end{equation*}

Let $W$ be a vector space over $\mathbb{R}$, $\dim W=n<\infty$ and let $X\colon(\Omega,\mathcal{A},q)\to W$ be a random vector. Let $x\in W$ be a fixed vector, supposed to be a result of the measurement of random vector $X$, i.e. $\mathbb{E}X=x$. If the initial measure $q$ does not give us the required vector $x\in W$, then we have to choose another measure $dp=\rho dq$, such that

\begin{equation}
\label{cond}
\int\limits_{\Omega}\rho dq=1,\quad\int\limits_{\Omega}\rho Xdq=x.
\end{equation}

In other words, to get a fixed vector $x\in W$ as a result of the measurement we need to find such a density $\rho$ that conditions (\ref{cond}) hold. Obviously, conditions (\ref{cond}) cannot determine the density $\rho$ uniquely, therefore we need an additional requirement, which is called \textit{the principle of minimal information gain}:

\begin{equation}
\label{func}
I(\rho)=\int\limits_{\Omega}\rho\ln\rho dq\to\min\limits_{\rho}.
\end{equation}

Thus the problem of finding the density $\rho$ can be formulated as an extremal problem. We need to find the probability density $\rho$ satisfying constraints (\ref{cond}) and minimizing functional (\ref{func}).
\begin{Theorem}
The extremal probability measure $p$ is given by means of density $\rho$ as follows

\begin{equation}
\label{solvar}
\rho=\frac{1}{Z(\lambda)}e^{\langle\lambda,X\rangle},\quad Z(\lambda)=\int\limits_{\Omega}e^{\langle\lambda,X\rangle}dq,
\end{equation}

where $\lambda\in W^{*}$. The results of the measurement belong to a manifold

\begin{equation*}
L_{H}=\left\{x=-\frac{\partial H}{\partial\lambda}\right\}\subset W\times W^{*},
\end{equation*}

where $H(\lambda)=-\ln Z(\lambda)$.
\end{Theorem}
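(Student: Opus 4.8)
The plan is to treat this as a constrained optimization problem and apply the method of Lagrange multipliers in the calculus of variations. First I would form the Lagrangian functional
\begin{equation*}
\mathcal{L}(\rho,\mu,\lambda) = \int_{\Omega}\rho\ln\rho\, dq - \mu\left(\int_{\Omega}\rho\, dq - 1\right) - \left\langle\lambda, \int_{\Omega}\rho X\, dq - x\right\rangle,
\end{equation*}
where $\mu\in\mathbb{R}$ and $\lambda\in W^{*}$ are multipliers for the normalization and the moment constraints in (\ref{cond}) respectively. Taking the variational derivative with respect to $\rho$ and setting it to zero gives the pointwise stationarity condition $\ln\rho + 1 - \mu - \langle\lambda, X\rangle = 0$, hence $\rho = \exp(\mu - 1 + \langle\lambda,X\rangle)$. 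I would then absorb the constant $\exp(\mu-1)$ into a normalizing factor determined by the first constraint in (\ref{cond}): imposing $\int_{\Omega}\rho\, dq = 1$ forces $\exp(\mu-1) = 1/Z(\lambda)$ with $Z(\lambda) = \int_{\Omega}e^{\langle\lambda,X\rangle}\, dq$, which is exactly (\ref{solvar}). A brief remark that the functional $\rho\mapsto\int\rho\ln\rho\, dq$ is strictly convex (since $t\mapsto t\ln t$ is strictly convex) on the affine space cut out by the linear constraints shows this stationary point is the unique minimizer, so the extremal measure is genuinely the one claimed.

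Next I would identify the manifold containing the measurement results. With $\rho$ as in (\ref{solvar}), the second constraint in (\ref{cond}) reads $x = \int_{\Omega}\rho X\, dq = \frac{1}{Z(\lambda)}\int_{\Omega} X\, e^{\langle\lambda,X\rangle}\, dq$. The key observation is that differentiating $Z(\lambda)$ with respect to $\lambda$ brings down a factor of $X$: $\partial Z/\partial\lambda = \int_{\Omega} X\, e^{\langle\lambda,X\rangle}\, dq$, so $x = \frac{1}{Z(\lambda)}\,\partial Z/\partial\lambda = \partial(\ln Z)/\partial\lambda$. Setting $H(\lambda) = -\ln Z(\lambda)$ we get $x = -\partial H/\partial\lambda$, which is precisely the defining equation of $L_{H}\subset W\times W^{*}$. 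This shows every measurement result $(x,\lambda)$ satisfying the constraints lies on $L_{H}$.

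The main obstacle — or at least the point requiring the most care — is not the formal computation but the justification that the naive Lagrange-multiplier calculation actually yields a minimizer, and that the relevant integrals and differentiation under the integral sign are legitimate. For the discrete case (finite $\Omega$) this is routine, but in the general setting one should note the hypotheses under which $Z(\lambda)$ is finite and smooth in $\lambda$, and invoke strict convexity of the relative-entropy functional together with the fact that the constraint set in (\ref{cond}) is affine to conclude uniqueness. I would state the convexity argument explicitly as the rigorous replacement for "the critical point is a minimum," keeping the variational derivation as the route that produces the formula. The remaining identities relating $x$, $H$, and the derivatives of $Z$ are then immediate.
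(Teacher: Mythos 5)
Your proposal is correct: the Lagrange-multiplier computation, the normalization fixing $e^{\mu-1}=1/Z(\lambda)$, the identification $x=Z^{-1}\partial Z/\partial\lambda=-\partial H/\partial\lambda$, and the strict-convexity argument for uniqueness together constitute the standard proof of this statement, which the paper itself omits and delegates to the cited reference \cite{Lych}. Your attention to differentiability of $Z$ under the integral sign and to convexity on the affine constraint set is exactly the right way to upgrade the formal variational calculation to a rigorous minimization argument.
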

The proof can be found in \cite{Lych}.
\begin{Remark}

\begin{enumerate}
\item The function $Z(\lambda)$ is called the \textit{partition function}.
\item The function $H(\lambda)$ is called the \textit{Hamiltonian}.
\end{enumerate}
\end{Remark}

Note that a manifold $\Phi=W\times W^{*}$ is equipped with the symplectic structure

\begin{equation*}
\omega=d\lambda\wedge dx=\sum\limits_{i=1}^{n}d\lambda_{i}\wedge dx_{i}.
\end{equation*}

A pair $(\Phi,\omega)$ is therefore the symplectic manifold. Moreover, the manifold $L_{H}$ turns out to be Lagrangian, i.e. $\omega|_{L_{H}}=0$.

Thus the results of the measurement of random vectors are given by a Lagrangian manifold, and having given a Lagrangian manifold one can find out both extreme probability measure $p$ and expectation $x$ of random vector $X$.

Let us now introduce the information gain into the picture. To that end, construct the contactization $\widehat\Phi$ of $\Phi$ by the following way:

\begin{equation*}
\widehat\Phi=\mathbb{R}\times\Phi=\mathbb{R}^{2n+1}(u,x,\lambda).
\end{equation*}

Equip $\widehat\Phi$ with the contact form

\begin{equation}
\label{contmeas}
\theta=du-\sum\limits_{i=1}^{n}\lambda_{i}dx_{i}.
\end{equation}

Thus $(\widehat\Phi,\theta)$ is a contact space. Let $a=(x,\lambda)\in L_{H}$ and construct a manifold $\widehat L$ of dimension $n$ as follows:

\begin{equation*}
\widehat L=\left\{u=I(a),\,x=-\frac{\partial H}{\partial\lambda}\right\}\subset\widehat\Phi.
\end{equation*}

\begin{Theorem}
The manifold $\widehat L$ is Legendrian, i.e. $\theta|_{\widehat L}=0$.
\end{Theorem}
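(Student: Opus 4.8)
The plan is to compute the restriction of the information gain $I$ to the Lagrangian manifold $L_{H}$ explicitly, rewrite $u$ on $\widehat L$ as an honest function of the coordinates $\lambda$, and then differentiate.

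First I would evaluate $I(a)$ at a point $a=(x,\lambda)\in L_{H}$ using the explicit extremal density supplied by the previous theorem, $\rho=Z(\lambda)^{-1}e^{\langle\lambda,X\rangle}$. Since $\ln\rho=\langle\lambda,X\rangle-\ln Z(\lambda)$, substituting into $I(\rho)=\int_{\Omega}\rho\ln\rho\,dq$ and invoking the two defining constraints $\int_{\Omega}\rho\,dq=1$ and $\int_{\Omega}\rho X\,dq=x$ yields
\begin{equation*}
I(a)=\langle\lambda,x\rangle-\ln Z(\lambda)=\langle\lambda,x\rangle+H(\lambda),
\end{equation*}
because $H=-\ln Z$. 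Hence on $\widehat L$ we simultaneously have $u=\langle\lambda,x\rangle+H(\lambda)$ and $x=-\partial H/\partial\lambda$.

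Next I would take $\lambda\in W^{*}\cong\mathbb{R}^{n}$ as coordinates on $\widehat L$ — which at once gives $\dim\widehat L=n$ — and compute $du$ along $\widehat L$. Writing $u=\sum_{i}\lambda_{i}x_{i}+H(\lambda)$ and differentiating,
\begin{equation*}
du=\sum_{i}x_{i}\,d\lambda_{i}+\sum_{i}\lambda_{i}\,dx_{i}+\sum_{i}\frac{\partial H}{\partial\lambda_{i}}\,d\lambda_{i}.
\end{equation*}
On $\widehat L$ one has $\partial H/\partial\lambda_{i}=-x_{i}$, so the first and third sums cancel and $du=\sum_{i}\lambda_{i}\,dx_{i}$, i.e. $\theta|_{\widehat L}=du-\sum_{i}\lambda_{i}\,dx_{i}=0$. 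Together with $\dim\widehat L=n$ this is precisely the assertion that $\widehat L$ is Legendrian.

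The argument is routine once the identity $I|_{L_{H}}=\langle\lambda,x\rangle+H$ is established; the only place demanding care is that first step, where one must be sure the constraints of the variational problem hold at every point used — which they do, by the very construction of $L_{H}$. Conceptually there is nothing mysterious here: $I$ restricted to $L_{H}$ is a primitive of the tautological form $\sum_{i}\lambda_{i}\,dx_{i}$, which is closed on $L_{H}$ exactly because $L_{H}$ is Lagrangian, so lifting $L_{H}$ by this primitive is bound to produce a Legendrian manifold; the explicit formula merely identifies which primitive it is.
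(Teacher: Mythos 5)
Your proposal is correct and follows essentially the same route as the paper: both arguments rest on the identity $I|_{L_{H}}=H(\lambda)+\langle\lambda,x\rangle$ (the paper packages this as an auxiliary function $J$ with $J|_{L_{H}}=I$, you compute $I$ on $L_{H}$ directly) and then differentiate, using $x=-\partial H/\partial\lambda$ to cancel the $d\lambda_{i}$ terms and obtain $du=\sum_{i}\lambda_{i}\,dx_{i}$ on $\widehat L$. The closing remark that $I$ is a primitive of the tautological form on the Lagrangian manifold $L_{H}$ is a nice conceptual summary, but it is the same proof.
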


\begin{proof}
First of all, introduce a function $J(x,\lambda)$:

\begin{equation*}
J(\lambda,x)=H(\lambda)+\langle\lambda,x\rangle.
\end{equation*}

Let us show that $J|_{L_{H}}=I$. Indeed, using (\ref{solvar}) we have

\begin{equation*}
J|_{L_{H}}=H(\lambda)\int\limits_{\Omega}\rho dq-\langle\lambda,H_{\lambda}\rangle=\int\limits_{\Omega}\frac{e^{\langle\lambda,X\rangle}}{Z(\lambda)}\left(\langle\lambda,X\rangle-\ln Z(\lambda)\right)dq=\int\limits_{\Omega}\rho\ln\rho dq=I.
\end{equation*}

The differential of the function $J(\lambda,x)$ is

\begin{equation*}
dJ=\sum\limits_{i=1}^{n}\left(x_{i}+\frac{\partial H}{\partial\lambda_{i}}\right)d\lambda_{i}+\sum\limits_{i=1}^{n}\lambda_{i}dx_{i},
\end{equation*}

which implies that $dJ|_{L_{H}}=\widehat\theta|_{L_{H}}$, where

\begin{equation*}
\widehat\theta=\sum\limits_{i=1}^{n}\lambda_{i}dx_{i}.
\end{equation*}

Taking into account the equality $J|_{L_{H}}=I$, we get $\widehat\theta|_{L_{H}}=dI$. Finally,

\begin{equation*}
\theta|_{\widehat L}=\left.\left(du-\widehat\theta\right)\right|_{\widehat L}=dI-\widehat\theta|_{L_{H}}=0.
\end{equation*}
\end{proof}
It is worth to say that a canonical projection $\pi\colon\widehat\Phi\to\Phi$, $\pi(u,x,\lambda)=(x,\lambda)$ being restricted to the Legendrian manifold $\widehat L$ becomes a local diffeomorphism with the image $L_{H}$, i.e. $\pi({\widehat L})=L_{H}$ and the differential 2-form $d\theta$ is a pullback of the symplectic form $\omega$, $d\theta=\pi^{*}(\omega)$.

Summarizing all above discussion, we conclude that any measurement of random vectors can be represented by means of Legendrian submanifold $\widehat L$ in the contact manifold $\widehat\Phi$. This Legendrian manifold gives us knowledge of extremal measure $p$ (or, equivalently, the probability density $\rho$), average values $x$ of random vector $X$ and additionally the values of the information gain function $I(\lambda)$.

\subsection{Variance of random vectors}
The next step is to analyze the variance of random vector $X$. Recall that the second moment is a symmetric 2-form $\mu_{2}\in S^{2}(W)$ defined by the formula

\begin{equation*}
\mu_{2}(X)=\int\limits_{\Omega}X(\omega)\otimes X(\omega)dp.
\end{equation*}

\textit{Variance} is a central second moment, i.e. a symmetric 2-form $\sigma_{2}\in S^{2}(W)$

\begin{equation*}
\sigma_{2}(X)=\mu_{2}(X-\mu_{1}(X))=\mu_{2}(X)-\mu_{1}(X)\otimes\mu_{1}(X).
\end{equation*}

\begin{Theorem}[\cite{Lych}]
The variance of a random vector $X$ is

\begin{equation*}
\sigma_{2}(X)=-\mathrm{Hess}(H),
\end{equation*}

where $\mathrm{Hess}(H)=\sum\limits_{i,j=1}^{n}H_{\lambda_{i}\lambda_{j}}d\lambda_{i}\otimes d\lambda_{j}$ is the Hessian of the Hamiltonian $H(\lambda)$.
\end{Theorem}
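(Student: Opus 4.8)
The plan is to compute the second moment and the variance directly from the explicit density $\rho$ given in Theorem 3 (equation (\ref{solvar})) and to recognize the result as derivatives of $H(\lambda)=-\ln Z(\lambda)$. First I would differentiate the partition function $Z(\lambda)=\int_{\Omega}e^{\langle\lambda,X\rangle}dq$ under the integral sign: one differentiation gives $\partial Z/\partial\lambda_{i}=\int_{\Omega}X_{i}e^{\langle\lambda,X\rangle}dq$, and a second gives $\partial^{2}Z/\partial\lambda_{i}\partial\lambda_{j}=\int_{\Omega}X_{i}X_{j}e^{\langle\lambda,X\rangle}dq$. Dividing by $Z$ and using $dp=\rho\,dq=Z^{-1}e^{\langle\lambda,X\rangle}dq$, these become $Z^{-1}\partial Z/\partial\lambda_{i}=\mathbb{E}X_{i}$ and $Z^{-1}\partial^{2}Z/\partial\lambda_{i}\partial\lambda_{j}=\mu_{2}(X)_{ij}=\int_{\Omega}X_{i}X_{j}\,dp$.

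Next I would relate these to $H=-\ln Z$. We have $\partial H/\partial\lambda_{i}=-Z^{-1}\partial Z/\partial\lambda_{i}=-\mathbb{E}X_{i}$, consistent with the manifold $L_{H}=\{x=-H_{\lambda}\}$ from Theorem 1, so the first moment is $\mu_{1}(X)=-H_{\lambda}$. Differentiating once more,
\begin{equation*}
H_{\lambda_{i}\lambda_{j}}=-\frac{\partial}{\partial\lambda_{j}}\left(\frac{1}{Z}\frac{\partial Z}{\partial\lambda_{i}}\right)=-\frac{1}{Z}\frac{\partial^{2}Z}{\partial\lambda_{i}\partial\lambda_{j}}+\frac{1}{Z^{2}}\frac{\partial Z}{\partial\lambda_{i}}\frac{\partial Z}{\partial\lambda_{j}}=-\mu_{2}(X)_{ij}+(\mathbb{E}X_{i})(\mathbb{E}X_{j}).
\end{equation*}
Hence $\mathrm{Hess}(H)_{ij}=H_{\lambda_{i}\lambda_{j}}=-\bigl(\mu_{2}(X)_{ij}-\mu_{1}(X)_{i}\mu_{1}(X)_{j}\bigr)=-\sigma_{2}(X)_{ij}$, where the last equality is exactly the displayed decomposition $\sigma_{2}(X)=\mu_{2}(X)-\mu_{1}(X)\otimes\mu_{1}(X)$ stated just before the theorem. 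Writing this in tensor form gives $\sigma_{2}(X)=-\sum_{i,j}H_{\lambda_{i}\lambda_{j}}\,d\lambda_{i}\otimes d\lambda_{j}=-\mathrm{Hess}(H)$, which is the claim.

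The only genuine obstacle is the justification of differentiation under the integral sign when $\Omega$ is not finite; on a general probability space one needs $e^{\langle\lambda,X\rangle}$ and its first and second $\lambda$-derivatives to be dominated by a $q$-integrable function locally uniformly in $\lambda$, i.e. $\lambda$ should lie in the interior of the set where $Z(\lambda)<\infty$. I would simply remark that we work on the open domain where $Z$ is finite and smooth (the natural domain of $H$), so the interchange is legitimate; for the discrete space of Section 2.1 the sums are finite and nothing needs to be said. Everything else is a short bookkeeping computation, so I would keep the written proof to essentially the three displayed lines above.
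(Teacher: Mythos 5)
Your computation is correct: the identities $H_{\lambda_i}=-\mathbb{E}X_i$ and $H_{\lambda_i\lambda_j}=-\mu_2(X)_{ij}+\mu_1(X)_i\mu_1(X)_j$ follow from differentiating $H=-\ln Z$ twice, and combined with the decomposition $\sigma_2(X)=\mu_2(X)-\mu_1(X)\otimes\mu_1(X)$ stated just before the theorem they give exactly $\sigma_2(X)=-\mathrm{Hess}(H)$. The paper itself supplies no proof (it defers to the cited reference), and your argument is the standard one for this statement --- the variance as the Hessian of the log-partition function --- so there is nothing to compare beyond noting that your remark about differentiating under the integral sign on the open domain where $Z$ is finite is the right caveat and is all that is needed.
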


Note that the symplectic manifold $\Phi$ is equipped with the universal quadratic form $\kappa$:

\begin{equation*}
\kappa=d\lambda\cdot dx=\frac{1}{2}\sum\limits_{i=1}^{n}(d\lambda_{i}\otimes dx_{i}+dx_{i}\otimes d\lambda_{i}).
\end{equation*}

Its restriction to the Lagrangian manifold $L_{H}$

\begin{equation*}
\kappa|_{L_{H}}=\left.\frac{1}{2}\sum\limits_{i=1}^{n}(d\lambda_{i}\otimes dx_{i}+dx_{i}\otimes d\lambda_{i})\right|_{x=-H_{\lambda}}=-\mathrm{Hess}(H)=\sigma_{2}(X)
\end{equation*}

coincides with the variance of random vector $X$. Since the variance is positive, the only areas on $L_{H}$ make sense where the differential quadratic form $\kappa|_{L_{H}}$ defines a Riemannian structure.

Thus we have shown that measurement of random vectors leads us to the following geometric structures on $\Phi=W\times W^{*}$.
\begin{itemize}
\item symplectic structure

\begin{equation*}
\omega=d\lambda\wedge dx
\end{equation*}

\item pseudo-Riemannian structure

\begin{equation*}
\kappa=d\lambda\cdot dx
\end{equation*}

\end{itemize}
Moreover, Lagrangian manifolds $L_{H}\subset(\Phi,\omega)$ representing expectations of random vectors $X$ consist of areas where the quadratic form $\kappa|_{L_{H}}$ is either positive, which we call \textit{applicable phases}, or not.

\subsection{Relations with Thermodynamics}
First of all, we recall that any thermodynamical system is described by two types of variables, extensive (volume, energy, mass) and intensive (pressure, temperature, chemical potential). A distinctive property of extensive variables is their additivity with respect to division of the system to a disjoint union of subsystems. Secondly, the main law of thermodynamics (in particular, for gas-like systems) including the first and the second laws states that the differential form

\begin{equation}
\label{conttherm}
\theta=-dS+T^{-1}dE+pT^{-1}dV-\gamma T^{-1}dm
\end{equation}

must be zero. Here $S$ is entropy, $E$ is energy, $V$ is volume, $m$ is mass, $T$ and $p$ are temperature and pressure respectively, $\gamma$ is a chemical potential. Introducing $W_{int}=\mathbb{R}^{3}(p,T,\gamma)$ and $W_{ext}=\mathbb{R}^{3}(V,E,m)$ we come to a conclusion that a thermodynamical state is a Legendrian manifold $\widehat L\subset\mathbb{R}\times W_{int}\times W_{ext}$, where the main law of thermodynamics holds, i.e. $\theta|_{\widehat L}=0$. Moreover, form (\ref{contmeas}) coincides with (\ref{conttherm}) if one puts

\begin{equation}
\label{thermvect}
du=-dS,\quad (\lambda_{1},\lambda_{2},\lambda_{3})=(-T^{-1},-pT^{-1},\gamma T^{-1}),\quad (x_{1},x_{2},x_{3})=(E,V,m).
\end{equation}

Therefore, on the surface $\widehat{L}$ we have the relation $S=-I+\alpha$, where $\alpha$ is a constant. This means that thermodynamics can be viewed as a theory of measurement of extensive variables and entropy is an information gain up to a sign and additive constant. This in turn implies that principle of minimal information gain is exactly what in thermodynamics usually called \textit{principle of maximum entropy}.

As in measurement theory, consider projection $\pi\colon\mathbb{R}\times W_{int}\times W_{ext}\to W_{int}\times W_{ext}$. Then, its restriction to the manifold $\widehat L$ gives us an immersed Lagrangian manifold $L\subset W_{int}\times W_{ext}$ and $\Phi=W_{int}\times W_{ext}$ is a symplectic space with structure form

\begin{equation*}
\omega=d\theta=d\left(T^{-1}\right)\wedge dE+d\left(pT^{-1}\right)\wedge dV-d\left(\gamma T^{-1}\right)\wedge dm.
\end{equation*}

Condition for $L$ to be Lagrangian is expressed as $\omega|_{L}=0$. Again, we can see analogies with measurement.

Pseudo-Riemannian structures coming from measurement of random vectors are inherited in thermodynamics as well. Let us define the differential quadratic form $\kappa$ on $\Phi=W_{int}\times W_{ext}$ using (\ref{thermvect}) by the following way:

\begin{equation*}
\kappa=-d\left(T^{-1}\right)\cdot dE-d\left(pT^{-1}\right)\cdot dV+d\left(\gamma T^{-1}\right)\cdot dm,
\end{equation*}

and its restriction $\kappa|_{L}$ to the Lagrangian manifold $L$ has to be positive. We shall see below that domains where form $\kappa|_{L}$ is positive correspond to phases of the medium and conditions for $L$ to be Riemannian with respect to quadratic form $\kappa|_{L}$ are conditions of thermodynamic stability.

\section{Legendrian Manifolds For Gases}
In this section, we describe Legendrian and Lagrangian manifolds for gases (see also \cite{LRjgp,LRgsa,LRljm}). We pay special attention to ideal gases and virial model of real gases \cite{Onnes}, which are used further in optimal control problem.

Let us choose the extensive variables $(E,V,m)$ as coordinates on the Legendrian manifold $\widehat L$. Then, on $\widehat L$ we have entropy as a function $S(E,V,m)$. Since entropy is an extensive quantity, the function $S(E,V,m)$ is homogeneous of degree 1:

\begin{equation*}
S(E,V,m)=ms\left(\frac{E}{m},\frac{V}{m}\right).
\end{equation*}

Introducing specific variables $e=E/m$ --- specific energy, $v=V/m$ --- specific volume, $s(e,v)$ --- specific entropy, we get the following expression for contact structure $\theta$:

\begin{equation*}
\theta=\left(-s+T^{-1}e+pT^{-1}v-\gamma T^{-1}\right)dm+\left(-ds+T^{-1}de+pT^{-1}dv\right)m,
\end{equation*}

on a given Legendrian manifold $\theta|_{\widehat L}=0$, and therefore we get

\begin{equation*}
-ds+T^{-1}de+pT^{-1}dv=0,\quad \gamma=e-Ts+pv.
\end{equation*}

The differential quadratic form $\kappa$ in terms of specific variables takes the form

\begin{equation*}
\kappa=-m\left(d(T^{-1})\cdot de+d(pT^{-1})\cdot dv)\right),
\end{equation*}

and since $m>0$, the condition of positivity of $\kappa$ becomes equivalent to negativity of the form $-m^{-1}\kappa$, which we will continue denoting by $\kappa$:

\begin{equation}
\label{kappa}
\kappa=d(T^{-1})\cdot de+d(pT^{-1})\cdot dv.
\end{equation}

Summarizing, we have the following description of thermodynamic states of gases. Consider the contact space $(\mathbb{R}^{5},\theta)$ equipped with coordinates $(s,e,v,p,T)$ and structure form

\begin{equation*}
\theta=-ds+T^{-1}de+pT^{-1}dv.
\end{equation*}

By a thermodynamic state we mean a Legendrian manifold $\widehat L$, such that $\theta|_{\widehat L}=0$. It can be defined by a given function $\sigma(e,v)$:

\begin{equation}
\label{legmani}
\widehat L=\left\{s=\sigma(e,v),\,p=\frac{\sigma_{v}}{\sigma_{e}},\,T=\frac{1}{\sigma_{e}}\right\}.
\end{equation}

To eliminate the specific entropy form our consideration we use a projection $\pi\colon\mathbb{R}^{5}\to\mathbb{R}^{4}$, $\pi(s,e,v,p,T)=(e,v,p,T)$. Its restriction to the Legendrian manifold $\widehat L$ gives an immersed Lagrangian manifold $L\subset\mathbb{R}^{4}$, such that $\omega|_{L}=0$, where

\begin{equation*}
\omega=d\theta=d(T^{-1})\wedge de+d(pT^{-1})\wedge dv
\end{equation*}

defines a symplectic structure on $\mathbb{R}^{4}(e,v,p,T)$. Since any 2-dimensional surface $L\subset(\mathbb{R}^{4},\omega)$ can be given by two functions (state equations)

\begin{equation*}
L=\left\{f_{1}(e,v,p,T)=0,\,f_{2}(e,v,p,T)=0\right\},
\end{equation*}

the condition $\omega|_{L}=0$ is expressed as $[f_{1},f_{2}]=0$ on $L$, where $[f_{1},f_{2}]$ is the Poisson bracket with respect to the symplectic structure $\omega$:

\begin{equation*}
[f_{1},f_{2}]\,\omega\wedge\omega=df_{1}\wedge df_{2}\wedge\omega.
\end{equation*}

The expression for the bracket $[f_{1},f_{2}]$ in coordinates is given by the formula:

\begin{equation*}
[f_{1},f_{2}]=\frac{1}{2}\left(pT\left( f_{1p}f_{2e}-f_{1e}f_{2p}\right) +T^{2}\left( f_{1T}f_{2e}-f_{1e}f_{2T}\right) +T\left( f_{1v}f_{2p}-f_{1p}f_{2v}\right)\right).
\end{equation*}

Suppose that functions $f_{1}$ and $f_{2}$ are given in a usual for thermodynamics of gases form

\begin{equation}
\label{caltherm}
f_{1}=p-A(v,T),\quad f_{2}=e-B(v,T).
\end{equation}

Then the equation $[f_{1},f_{2}]|_{L}=0$ takes the form

\begin{equation*}
(T^{-2}B)_{v}=(T^{-1}A)_{T}
\end{equation*}

and therefore the following theorem is valid

\begin{Theorem}
The Lagrangian manifold $L$ is given by the Massieu-Planck potential $\phi(v,T)$:

\begin{equation}
\label{plank}
p=RT\phi_{v},\quad e=RT^{2}\phi_{T},
\end{equation}

where $R$ is the universal gas constant.
\end{Theorem}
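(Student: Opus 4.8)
The plan is to show that the integrability condition $(T^{-2}B)_v = (T^{-1}A)_T$ is precisely the condition for the existence of a potential $\phi(v,T)$ generating the state equations in the claimed form. First I would start from the two functions $A(v,T)$ and $B(v,T)$ defining $L$ via \eqref{caltherm} and rewrite the compatibility equation $[f_1,f_2]|_L = 0$ in the symmetric form already derived, namely $(T^{-2}B)_v = (T^{-1}A)_T$. The key observation is that this is exactly the closedness condition $d\eta = 0$ for the $1$-form
\begin{equation*}
\eta = \frac{A(v,T)}{T}\,dv + \frac{B(v,T)}{T^2}\,dT
\end{equation*}
on the $(v,T)$-plane, since $\partial_T\!\left(A/T\right) = \partial_v\!\left(B/T^2\right)$ is precisely the stated identity. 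On a simply connected domain, a closed $1$-form is exact, so there exists a function $\psi(v,T)$ with $\psi_v = A/T$ and $\psi_T = B/T^2$.

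Next I would recover \eqref{plank} by a suitable normalization. Writing $\psi = R\phi$ (absorbing the universal gas constant $R$ so that $\phi$ has the conventional dimensionless form used in the sequel), we get $p = A(v,T) = T\psi_v = RT\phi_v$ and $e = B(v,T) = T^2\psi_T = RT^2\phi_T$, which are exactly the Massieu–Planck relations. I would then remark that $\phi$ is determined up to an additive constant, and that conversely any smooth $\phi(v,T)$ yields via \eqref{plank} functions $A = RT\phi_v$, $B = RT^2\phi_T$ automatically satisfying the compatibility condition, so that the correspondence between Lagrangian manifolds of the form \eqref{caltherm} and Massieu–Planck potentials is essentially a bijection. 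It would also be worth tracing back through \eqref{legmani} and the relation $\theta|_{\widehat L}=0$ to check that $\phi$ is, up to sign and the constant, the restriction of the Massieu function $s - e/T - \gamma/T$ or an equivalent free-energy-type potential, which justifies the name; but for the bare statement the existence argument above suffices.

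The main obstacle is essentially bookkeeping rather than conceptual: one must verify carefully that the Poisson bracket computation in the given coordinates reduces $[f_1,f_2]|_L=0$ exactly to $(T^{-2}B)_v = (T^{-1}A)_T$ — substituting $f_1 = p - A$, $f_2 = e - B$ into the displayed formula for $[f_1,f_2]$, using $f_{1p}=1$, $f_{1e}=0$, $f_{1v}=-A_v$, $f_{1T}=-A_T$, $f_{2e}=1$, $f_{2p}=0$, $f_{2v}=-B_v$, $f_{2T}=-B_T$, and simplifying. The only subtlety beyond this is the standard topological caveat that exactness of the closed form $\eta$ requires the domain of definition (the region of $(v,T)$ where the gas model and, in view of the discussion of $\kappa|_L$, thermodynamic stability hold) to be simply connected; on such a domain the potential $\phi$ exists globally, and otherwise only locally, which is all that is needed since $L$ is an immersed manifold.
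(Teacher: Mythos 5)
Your proposal is correct and follows essentially the same route as the paper: the compatibility condition $[f_{1},f_{2}]|_{L}=0$ reduces to $(T^{-2}B)_{v}=(T^{-1}A)_{T}$, which you correctly identify as the closedness of the $1$-form $T^{-1}A\,dv+T^{-2}B\,dT$, so that a (local) potential $R\phi$ exists and yields $p=RT\phi_{v}$, $e=RT^{2}\phi_{T}$. The paper leaves this Poincar\'e-lemma step implicit, and your explicit verification of the bracket computation and the local-versus-global caveat are accurate.
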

Using the Massieu-Planck potential one can write the differential quadratic form (\ref{kappa}) in the following way:

\begin{equation*}
R^{-1}\kappa=-\left(\phi_{TT}+2T^{-1}\phi_{T}\right)dT\cdot dT+\phi_{vv}dv\cdot dv
\end{equation*}

and we conclude that conditions of applicability for the thermodynamic state model are

\begin{equation}
\label{applic}
\phi_{TT}+2T^{-1}\phi_{T}>0,\quad \phi_{vv}<0.
\end{equation}

Using (\ref{plank}) we obtain that inequalities (\ref{applic}) are equivalent to

\begin{equation*}
e_{T}>0,\quad p_{v}<0,
\end{equation*}

which are the conditions of thermodynamic stability.

By a \textit{thermodynamic process} we shall mean a contact transformation of $\widehat\Phi=\mathbb{R}\times W_{int}\times W_{ext}=\mathbb{R}^{5}(s,p,T,v,e)$ preserving the Legendrian manifold $\widehat L$. Infinitesimally, such a transformation is given by a contact vector field $X$, i.e. $L_{X}(\theta)\wedge\theta=0$, where $L_{X}$ is a Lie derivative along the vector field $X$. Contact vector fields are defined by \textit{generating functions} (see, for example, \cite{KLR}) and in thermodynamic case have the form \cite{Lych}:

\begin{equation*}
X_{f}=T\left(pf_{p}+Tf_{T}\right)\partial_{e}-Tf_{p}\partial_{v}+\left(f+Tf_{T}\right)\partial_{s}+T\left(f_{v}-pf_{e}\right)\partial_{p}-T\left(f_{s}+Tf_{e}\right)\partial_{T},
\end{equation*}

where $f\in C^{\infty}(\widehat\Phi)$ is a generating function of the vector field $X_{f}$. One can show that $L_{X_{f}}(f)=X_{f}(f)=ff_{s}$ and therefore the vector field $X_{f}$ is tangent to the surface $\{f=0\}$. Thus for a given Legendrian manifold $\widehat L=\left\{f_{1}=f_{2}=f_{3}=0\right\}$ the restriction of the process $X_{f}$ to $\widehat L$ is represented as \cite{Lych}

\begin{equation*}
X_{f}=a_{1}X_{f_{1}}+a_{2}X_{f_{2}}+a_{3}X_{f_{3}},
\end{equation*}

where $a_{j}$ are functions on $\widehat L$. Using (\ref{legmani}) we get that restrictions $Y_{j}$ of vector fields $X_{f_{j}}$ to $\widehat L$ are

\begin{equation}
\label{fields}
Y_{1}=\sigma_{v}\sigma_{e}^{-2}\partial_{e}-\sigma_{e}^{-1}\partial_{v},\quad Y_{2}=\sigma_{e}^{-2}\partial_{e},\quad Y_{3}=0.
\end{equation}

\begin{Example}[Ideal gases]
For ideal gases, the Legendrian manifold $\widehat L$ is given by state equations

\begin{equation*}
f_{1}=pv-RT,\quad f_{2}=e-\frac{n}{2}RT,\quad f_{3}=s-R\ln(e^{n/2}v),
\end{equation*}

where $n$ is a degree of freedom.

The differential quadratic form $\kappa$ on $\widehat L$ is

\begin{equation}
\label{qformideal}
\kappa=-\frac{nR}{2e^{2}}de\cdot de-\frac{R}{v^{2}}dv\cdot dv.
\end{equation}

It is negative and applicable domain is therefore entire manifold $\widehat L$.

Vector fields $Y_{1}$ and $Y_{2}$ have the following form

\begin{equation}
\label{fieldsideal}
Y_{1}=-\frac{2ev}{nR}\partial_{v},\quad Y_{2}=-\frac{2e^{2}}{nR}\partial_{e}.
\end{equation}

\end{Example}
\begin{Example}[van der Waals gases and virial model]
One of the most important models of real gases is the van der Waals model:

\begin{equation*}
f_{1}=\left(p+\frac{a}{v^{2}}\right)(v-b)-RT,\quad f_{2}=e-\frac{n}{2}RT+\frac{a}{v}\quad f_{3}=s-R\ln\left(T^{n/2}(v-b)\right),
\end{equation*}

where $a$ and $b$ are constants responsible for particles' interaction and their volume respectively.

The differential quadratic form $\kappa$ in coordinates $(T,v)$ for van der Waals gases is \cite{LRljm}

\begin{equation*}
\kappa=-\frac{Rn}{2T^{2}}dT\cdot dT-\frac{v^{3}RT-2a(v-b)^{2}}{v^{3}T(v-b)^{2}}dv\cdot dv.
\end{equation*}

This form can change its sign and applicable domain in a plane $(T,v)$ for van der Waals model is given by inequality

\begin{equation*}
T>\frac{2a(v-b)^{2}}{Rv^{3}}.
\end{equation*}

The virial model for real gases' state equations was proposed in \cite{Onnes} and is of the form

\begin{equation*}
p=\frac{RT}{v}\left(1+\sum\limits_{i=1}A_{i}(T)v^{-i}\right).
\end{equation*}

For van der Waals gases, we will mainly be interested in the first term of the expansion which has the form

\begin{equation*}
A_{1}(T)=b-\frac{a}{RT}.
\end{equation*}

In this approximation, vector fields $Y_{1}$ and $Y_{2}$ are

\begin{equation}
\label{fieldsreal}
Y_{1}=-\frac{2a(ev+a)}{Rv^{2}n}\partial_{e}-\frac{2(ev+a)}{Rn}\partial_{v},\quad Y_{2}=-\frac{2(ev+a)^{2}}{nRv^{2}}\partial_{e}.
\end{equation}

\end{Example}

\section{Optimal Control}
In this section, we formulate the control problem for thermodynamic processes of gases and provide exact solution for ideal gases and asymptotic expansion of integrals for real ones.

Let thermodynamic state of a gas be given by a Legendrian manifold $\widehat L$ and let us choose vector fields $Y_{1}$ and $Y_{2}$ defined by formula (\ref{fields}) as a basis in module of vector fields on $\widehat L$. We will use the notation $x=(e,v)$. Let $x^{(1)}=(e_{1},v_{1})$ and $x^{(2)}=(e_{2},v_{2})$ be two fixed points in applicable domains on $\widehat L$. Let $l\subset\widehat L$ be an integral curve of the unknown vector field $Y=u_{1}Y_{1}+u_{2}Y_{2}$ and let $\alpha=pdv$ be a work 1-form. Introduce a quality functional $J$:

\begin{equation}
\label{workfunc}
J=\int\limits_{l}\alpha.
\end{equation}

Physical meaning of $J$ is a work of the gas along the process curve $l$. We are looking for a process $Y=u_{1}Y_{1}+u_{2}Y_{2}$ such that functional (\ref{workfunc}) reaches its maximum value. Vector $u=(u_{1},u_{2})$ is a vector of control parameters. If $t$ is a parameter on $l$, then we will suppose that $t=0$ corresponds to the point $x^{(1)}$ and $t=t_{0}$, where $t_{0}$ is a given value of the parameter $t$, corresponds to $x^{(2)}$. Rewrite the vector field $Y$ as

\begin{equation*}
Y=Y^{(1)}(x,u)\partial_{e}+Y^{(2)}(x,u)\partial_{v},
\end{equation*}

where coefficients $Y^{(1)}$, $Y^{(2)}$ are defined by means of (\ref{fields}).

We define the domain of admissible control parameters by means of the differential quadratic form $\kappa$. On the Legendrian manifold its physical meaning is (up to a sign) the variance of extensive variables $(e,v)$, we limit a relative variance by a positive number $\delta$:

\begin{equation*}
-\frac{\kappa(Y,Y)}{e^{2}}\le\delta,
\end{equation*}

which leads to inequality

\begin{equation*}
-\kappa(Y_{1},Y_{1})u_{1}^{2}-2\kappa(Y_{1},Y_{2})u_{1}u_{2}-\kappa(Y_{2},Y_{2})u_{2}^{2}\le\delta e^{2}.
\end{equation*}

Therefore, for a given point $x\in\widehat L$, the boundary $\partial U$ of the admissible domain $U$ for control parameters is an ellipse with a centre at that point and whose semi-axes depend, in general, on $x$.

Summarizing, we formulate an extremal problem for finding the process $Y$ in the form:

\begin{align}
\label{problem}
&\dot x=(Y^{(1)}(x,u),Y^{(2)}(x,u)),\quad x\in\mathbb{R}^{2},\,u\in U,\notag\\
&x(0)=x^{(1)},\, x(t_{0})=x^{(2)},\\
&J=\int\limits_{0}^{t_{0}}\alpha(Y)dt\to\max\limits_{u\in U}.\notag
\end{align}

The Hamiltonian of problem (\ref{problem}) has the form

\begin{equation}
\label{hamilt}
H(x,\lambda,u)=\alpha(Y)+\lambda_{1}Y^{(1)}(x,u)+\lambda_{2}Y^{(2)}(x,u),
\end{equation}

where $\lambda=(\lambda_{1},\lambda_{2})$ are Lagrangian multipliers.

\subsection{Ideal Gases}
For ideal gases, vector fields $Y_{1}$ and $Y_{2}$ have form (\ref{fieldsideal}) and vector field $Y$ is

\begin{equation*}
Y=-\frac{2ev}{nR}u_{1}\partial_{v}-\frac{2e^{2}}{nR}u_{2}\partial_{e}.
\end{equation*}

Therefore using expression (\ref{qformideal}) for the differential quadratic form $\kappa$ in case of ideal gases we get the domain $U$ of admissible control parameters:

\begin{equation*}
U=\left\{(u_{1},u_{2})\in\mathbb{R}^{2}\mid\frac{4}{n^{2}R}u_{1}^{2}+\frac{2}{nR}u_{2}^{2}\le\delta\right\},
\end{equation*}

and its boundary is an ellipse with constant semi-axes.

The commutator of vector fields $Y_{1}$ and $Y_{2}$ is

\begin{equation*}
[Y_{1},Y_{2}]=\frac{2e}{nR}Y_{1}.
\end{equation*}

The dual basis is generated by 1-forms

\begin{equation*}
\xi_{1}=-\frac{nR}{2ev}dv,\quad\xi_{2}=-\frac{nR}{2e^{2}}de.
\end{equation*}

Due to the Lie-Bianchi theorem (see, for example, \cite{KLR}), 1-form $\xi_{2}$ is exact, i.e. $\xi_{2}=dq_{1}$, where $q_{1}=nR(2e)^{-1}$. The restriction of the form $\xi_{1}$ to the curve $q_{1}=C_{1}$ is exact too and its potential is $q_{2}=-C_{1}\ln v+C_{2}$, where $C_{i}$ are constants. Let $q=(q_{1},q_{2})$ be new coordinates on $\widehat L$. Then, the inverse transformation is

\begin{equation}
\label{invtr}
e=\frac{nR}{2q_{1}},\quad v=\exp\left(-\frac{q_{2}}{q_{1}}\right).
\end{equation}

In new coordinates $(q_{1},q_{2})$ vector fields $Y_{1}$ and $Y_{2}$ take the form:

\begin{equation*}
Y_{1}=\partial_{q_{2}},\quad Y_{2}=\partial_{q_{1}}+\frac{q_{2}}{q_{1}}\partial_{q_{2}}.
\end{equation*}

Therefore Hamiltonian (\ref{hamilt}) will take the form

\begin{equation}
\label{hamideal}
H(q,\lambda,u)=-\frac{Ru_{1}}{q_{1}^{2}}+\lambda_{1}u_{2}+\lambda_{2}\left(\frac{q_{2}u_{2}}{q_{1}}+u_{1}\right).
\end{equation}

Since Hamiltonian (\ref{hamideal}) is linear with respect to control parameters $(u_{1},u_{2})$, it reaches its extremal values on the boundary $\partial U$. Let $\tau$ be a parameter on $\partial U$. Then control parameters $(u_{1},u_{2})$ can be written as

\begin{equation*}
u_{1}=\frac{n\sqrt{R\delta}}{2}\cos\tau,\quad u_{2}=\sqrt{\frac{nR\delta}{2}}\sin\tau,
\end{equation*}

and the Hamiltonian $H(q,\lambda,u)$ takes the form

\begin{equation}
\label{hamiltbound}
H(q,\lambda,\tau)=\frac{\sqrt{2nR\delta}q_{1}(q_{1}\lambda_{1}+q_{2}\lambda_{2})\sin\tau+\sqrt{R\delta}n\left(q_{1}^{2}\lambda_{2}-R\right)\cos\tau}{2q_{1}^{2}}.
\end{equation}

To find the points where the Hamiltonian $H(q,\lambda,\tau)$ reaches its maximum one has to resolve the equation $H_{\tau}=0$ with respect to $\tau$:

\begin{equation*}
\sin\left(\tau+\arctan\left(\frac{\sqrt{2}q_{1}(q_{1}\lambda_{1}+q_{2}\lambda_{2})}{\sqrt{n}(R-q_{1}^{2}\lambda_{2})}\right)\right)=0.
\end{equation*}

Its solution is

\begin{equation}
\label{roots}
\tau^{*}(q,\lambda)=\pi(2k+1)-\arctan\left(\frac{\sqrt{2}q_{1}(q_{1}\lambda_{1}+q_{2}\lambda_{2})}{\sqrt{n}\left(R-q_{1}^{2}\lambda_{2}\right)}\right),\quad k\in\mathbb{Z}.
\end{equation}

Substituting roots (\ref{roots}) into (\ref{hamiltbound}) we get the following expression for Hamiltonian $H(q,\lambda)$:

\begin{equation}
\label{ham}
H(q,\lambda)=\frac{1}{2q_{1}^{2}}\sqrt{nR\delta\left(nq_{1}^{4}\lambda_{2}^2+2q_{1}^{4}\lambda_{1}^2+4q_{1}^3q_{2}\lambda_{1}\lambda_{2}+2q_{1}^{2}q_{2}^{2}\lambda_{2}^{2}-2Rnq_{1}^{2}\lambda_{2}+R^{2}n\right)}.
\end{equation}

To find the optimal process, one needs to solve the system

\begin{equation}
\label{hamsys}
\dot q_{1,2}=\frac{\partial H}{\partial\lambda_{1,2}},\quad \dot\lambda_{1,2}=-\frac{\partial H}{\partial q_{1,2}},
\end{equation}

where the Hamiltonian $H(q,\lambda)$ is given by (\ref{ham}). Since the Hamiltonian $H(q,\lambda)$ does not depend on the parameter $t$ explicitly, it is the integral of system (\ref{hamsys}). Moreover, the following theorem is valid:

\begin{Theorem}
Hamiltonian system (\ref{hamsys}) has an integral $G(q,\lambda)=q_{1}\lambda_{2}$ which is in involution with the Hamiltonian $H(q,\lambda)$ with respect to the Poisson bracket on phase space, i.e. $[G,H]=0$, where

\begin{equation*}
[G,H]\Omega\wedge\Omega=dG\wedge dH\wedge\Omega,\quad\Omega=dq\wedge d\lambda.
\end{equation*}

\end{Theorem}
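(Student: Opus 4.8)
The plan is to verify directly that $G = q_1\lambda_2$ Poisson-commutes with the Hamiltonian $H$ of \eqref{ham}, using the most economical route available. First I would observe that the expression under the square root in \eqref{ham} is a manageable object; write $H = \tfrac{1}{2q_1^2}\sqrt{nR\delta\,P(q,\lambda)}$, where $P$ is the polynomial in the parentheses. Since $[G,H]=0$ is equivalent to $[G,P]=0$ together with $G$ annihilating the prefactor $1/(2q_1^2)$ appropriately, I would compute $[G,H]$ by the Leibniz rule: $[G,H] = \tfrac{1}{2q_1^2}[G,\sqrt{nR\delta P}] + \sqrt{nR\delta P}\,[G, 1/(2q_1^2)]$. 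The Hamiltonian vector field of $G=q_1\lambda_2$ with respect to $\Omega = dq_1\wedge d\lambda_1 + dq_2\wedge d\lambda_2$ is $X_G = q_1\,\partial_{q_2} - \lambda_2\,\partial_{\lambda_1}$ (up to an overall sign depending on conventions, which I would fix by the normalization $[q_i,\lambda_j]=\delta_{ij}$). Thus $X_G$ acts only by shifting $q_2$ and $\lambda_1$, and in particular $X_G(q_1)=0$, so the second term above vanishes outright.

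Next I would compute $X_G(P) = q_1\,\partial_{q_2}P - \lambda_2\,\partial_{\lambda_1}P$. Reading off $P = nq_1^4\lambda_2^2 + 2q_1^4\lambda_1^2 + 4q_1^3 q_2\lambda_1\lambda_2 + 2q_1^2 q_2^2\lambda_2^2 - 2Rn q_1^2\lambda_2 + R^2 n$, one gets $\partial_{q_2}P = 4q_1^3\lambda_1\lambda_2 + 4q_1^2 q_2\lambda_2^2$ and $\partial_{\lambda_1}P = 4q_1^4\lambda_1 + 4q_1^3 q_2\lambda_2$, so
\begin{equation*}
X_G(P) = q_1\!\left(4q_1^3\lambda_1\lambda_2 + 4q_1^2 q_2\lambda_2^2\right) - \lambda_2\!\left(4q_1^4\lambda_1 + 4q_1^3 q_2\lambda_2\right) = 0.
\end{equation*}
Hence $X_G(P)=0$ identically, which gives $X_G(H)=0$, i.e. $[G,H]=0$. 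This is the whole computation; there is essentially no obstacle beyond bookkeeping, and the only subtlety is to match the sign/normalization convention of the Poisson bracket defined via $[G,H]\,\Omega\wedge\Omega = dG\wedge dH\wedge\Omega$ with the componentwise formula $[G,H] = \sum_i(\partial_{q_i}G\,\partial_{\lambda_i}H - \partial_{\lambda_i}G\,\partial_{q_i}H)$; since the claim is only that the bracket vanishes, the sign is immaterial.

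Finally, for completeness I would remark on Liouville integrability: the phase space is four-dimensional, so $H$ and $G$ being two functionally independent (which is clear away from degenerate loci, since $dH$ and $dG$ are visibly independent generically) integrals in involution establishes complete integrability in Liouville's sense, as asserted in the introduction. I would also note, as a sanity check and conceptual explanation of where $G$ comes from, that $q_1\lambda_2$ is the momentum conjugate to the scaling-type symmetry visible in the coordinates \eqref{invtr}; one can alternatively derive $G$ as the Noether integral of the one-parameter group under which $H$ is invariant, rather than guessing it, but the direct verification above is shortest.
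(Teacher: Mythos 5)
Your verification is correct: the paper states this theorem without proof, and your direct computation (noting $X_G=q_1\partial_{q_2}-\lambda_2\partial_{\lambda_1}$ kills both the prefactor $q_1^{-2}$ and the polynomial under the root, since $P$ depends on $\lambda_1,q_2$ only through the invariant combination $q_1\lambda_1+q_2\lambda_2$) is exactly the straightforward check the authors implicitly rely on. The Noether-symmetry remark is a nice bonus but nothing further is needed.
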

Thus Hamiltonian system (\ref{hamsys}) has two commuting integrals and is therefore integrable in Liouville's sense.

To construct solution to (\ref{hamsys}) we use the method of action-angle variables (see, for example, \cite{Arn}). The invariant manifold $M$ of system (\ref{hamsys}) is given by levels $H_{1}$ and $H_{2}$ of its integrals:

\begin{equation*}
M=\left\{(q,\lambda)\in\mathbb{R}^{4}\mid H(q,\lambda)=H_{1},\,G(q,\lambda)=H_{2}\right\}.
\end{equation*}

Choose $(q_{1},q_{2})$ as local coordinates on $M$. Then we have

\begin{equation*}
\lambda_{1}=\frac{-2H_{2}R\delta nq_{2}\pm\sqrt{D}}{2Rn\delta q_{1}^{2}},\quad\lambda_{2}=\frac{H_{2}}{q_{1}},
\end{equation*}

where $D=2R\delta n\left(4H_{1}^{2}q_{1}^{4}-\delta Rn^{2}(R-H_{2}q_{1})^{2}\right)$. Therefore the manifold $M$ can have different numbers of connected components depending on the number of roots of polynomial $D$.

\begin{Theorem}
The manifold $M$ has three connected components if levels of integrals $H_{1}$ and $H_{2}$ are related as

\begin{equation*}
H_{2}^{4}\delta n^{2}-64RH_{1}^{2}\ge0.
\end{equation*}

Otherwise, the manifold $M$ has two connected components.
\end{Theorem}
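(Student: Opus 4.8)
The plan is to reduce the statement to a one-dimensional question — counting the connected components of the set $N=\{q_{1}\in\mathbb{R}\mid D(q_{1})\ge 0\}$, assuming the generic situation $H_{1}\neq 0$, $H_{2}\neq 0$ — and then to transport the answer to $M$. Using the chart $(q_{1},q_{2})$ with the two branches $\lambda_{1}^{\pm}$ displayed above and $\lambda_{2}=H_{2}/q_{1}$, one sees that $M$ is the union of two copies of $N\times\mathbb{R}_{q_{2}}$ glued along $\{D=0\}\times\mathbb{R}_{q_{2}}$, since $\lambda_{1}^{+}=\lambda_{1}^{-}$ exactly where $D=0$. Because $D(0)=-2\delta^{2}R^{4}n^{3}<0$, the degenerate value $q_{1}=0$ of the chart does not lie in $N$, and the polynomial $D$ has only finitely many, hence isolated, zeros. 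Consequently each connected component of $N$ is either a bounded interval $[a,b]$ with $D(a)=D(b)=0$, which lifts to a cylinder $S^{1}\times\mathbb{R}$ in $M$, or an unbounded ray with a single finite endpoint at which $D=0$, which lifts to a plane $\mathbb{R}^{2}$; in either case one component of $N$ produces exactly one component of $M$, and components sitting over disjoint ranges of $q_{1}$ remain separated. Hence the number of connected components of $M$ equals the number of intervals constituting $N$.

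Next I would analyse $N=\{P\ge 0\}$, where $P(q_{1})=4H_{1}^{2}q_{1}^{4}-\delta R n^{2}(R-H_{2}q_{1})^{2}$ is the bracket occurring in $D$. Writing $P$ as a difference of two squares gives the factorisation
\begin{equation*}
P=\bigl(2H_{1}q_{1}^{2}-n\sqrt{\delta R}\,(R-H_{2}q_{1})\bigr)\bigl(2H_{1}q_{1}^{2}+n\sqrt{\delta R}\,(R-H_{2}q_{1})\bigr)=A(q_{1})\,B(q_{1})
\end{equation*}
into two quadratics in $q_{1}$ with constant terms $\mp n\sqrt{\delta R}\,R$. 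Exactly one of $A,B$ therefore has a negative product of roots, hence two distinct real roots of opposite sign — it is \emph{hyperbolic} for every value of the parameters. For the other factor the product of roots is positive, so its roots, when real, share the sign of $H_{2}$; computing its discriminant one finds that it has real roots precisely when $\delta n^{2}H_{2}^{4}-64RH_{1}^{2}\ge 0$, which is the condition in the theorem. Finally $P(0)=-\delta R^{3}n^{2}<0$ and $P(q_{1})\to+\infty$ as $q_{1}\to\pm\infty$.

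Assembling these facts yields the dichotomy. If $\delta n^{2}H_{2}^{4}-64RH_{1}^{2}>0$, then $P$ has four simple real roots — the positive and negative root of the hyperbolic factor together with two further roots of equal sign — so, the leading coefficient being positive, the sign of $P$ runs $+,-,+,-,+$ over the five intervals into which these roots divide $\mathbb{R}$; since $P(0)<0$ forces the origin into one of the two ``$-$'' intervals, the solution set $N$ consists of two unbounded rays and one bounded interval, that is three components, and therefore $M$ has three connected components. If $\delta n^{2}H_{2}^{4}-64RH_{1}^{2}<0$, the non-hyperbolic factor has no real root and is of constant sign, so $N$ has exactly the shape of the solution set of a single hyperbolic quadratic inequality — two unbounded rays — whence $M$ has two connected components. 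The borderline case $\delta n^{2}H_{2}^{4}=64RH_{1}^{2}$, in which the non-hyperbolic factor acquires a double root, is settled by locating that double root relative to the two roots of the hyperbolic factor.

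I expect the main obstacle to be the sign bookkeeping in the strict case: one must verify that the pair of extra roots lying on one side of the origin really does carve off exactly one bounded interval of $N$, rather than being absorbed into a ray, and this is where the ordering of the roots of $A$ and $B$ together with $P(0)<0$ are used decisively. By comparison, the topological passage from ``$k$ intervals of $N$'' to ``$k$ components of the branched double cover $M$'' is routine, once one has recorded that $0\notin N$ and that the zeros of $D$ are isolated.
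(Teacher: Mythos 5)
Your reduction to counting the intervals of $N=\{q_{1}:D(q_{1})\ge 0\}$ is exactly the route the paper intends (it offers no proof beyond the remark that the component count is governed by the roots of $D$), and the core of your argument checks out. The difference-of-squares factorisation $P=AB$ with $A=2H_{1}q_{1}^{2}+n\sqrt{\delta R}\,H_{2}q_{1}-n\sqrt{\delta R}\,R$ and $B=2H_{1}q_{1}^{2}-n\sqrt{\delta R}\,H_{2}q_{1}+n\sqrt{\delta R}\,R$ is clean: for $H_{1}>0$ (which holds generically since $H$ is a square root) the factor $A$ always has two real roots of opposite sign, the discriminant of $B$ is $n\sqrt{\delta R}\bigl(n\sqrt{\delta R}\,H_{2}^{2}-8H_{1}R\bigr)$, and squaring the nonnegative quantities reproduces precisely $H_{2}^{4}\delta n^{2}-64RH_{1}^{2}\ge 0$. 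Two remarks. First, $A$ and $B$ can only agree at $q_{1}=R/H_{2}$, where both equal $2H_{1}q_{1}^{2}\neq 0$, so in the strict case $P$ automatically has four simple real roots; your worry about "absorption into a ray" is then moot, because a quartic with positive leading coefficient and four simple real roots has sign pattern $+,-,+,-,+$ no matter where the roots sit, so $N$ is always two rays plus one bounded interval. The topological step (two branches $\lambda_{1}^{\pm}$ glued over $\{D=0\}$, $q_{2}$ a free factor, $0\notin N$) is also fine.

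The one piece you defer — the equality case — is the place where the argument, once completed, actually disagrees with the theorem as printed. At $n\sqrt{\delta R}\,H_{2}^{2}=8H_{1}R$ the double root of $B$ is $r_{0}=n\sqrt{\delta R}\,H_{2}/(4H_{1})$, and a direct substitution gives
\begin{equation*}
A(r_{0})=2H_{1}r_{0}^{2}+n\sqrt{\delta R}\,H_{2}r_{0}-n\sqrt{\delta R}\,R=nR\sqrt{\delta R}+2nR\sqrt{\delta R}-nR\sqrt{\delta R}=2nR\sqrt{\delta R}>0,
\end{equation*}
so the double root lies strictly inside one of the two rays $\{A\ge 0\}$ rather than in the gap between them. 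Hence $N=\{A\ge 0\}$ still has two components at the borderline (and $M$ is moreover singular there, the two branches crossing over $q_{1}=r_{0}$), so three components occur only for the strict inequality; the theorem's "$\ge$" should be "$>$". You should carry out this computation rather than leaving it as a pointer, both because it is the only case not settled by your general sign-pattern argument and because its outcome forces a (minor) correction to the statement. You should also record explicitly that the count presumes $H_{1}>0$: for $H_{1}=0$ the condition $H_{2}^{4}\delta n^{2}\ge 0$ holds trivially while $D\le 0$ everywhere and $M$ degenerates to a single line.
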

Singularities of projection of $M$ to the plane $(q_{1},q_{2})$ are given as $\Sigma=\cup\Sigma_{j}$, where

\begin{equation*}
\Sigma_{j}=\left\{(q_{1}^{(j)},q_{2})\mid q_{2}\in\mathbb{R},\,D(q_{1}^{(j)})=0\right\}.
\end{equation*}

Thus for a given initial point $(q^{(0)},\lambda^{(0)})$ the reachability set consists of points of $M$ belonging to the same connected component as $(q^{(0)},\lambda^{(0)})$ does.

Let us choose two Hamiltonian vector fields $X_{1}=X_{H}$ and $X_{2}=X_{G}$ as a basis in module of vector fields on phase space $\mathbb{R}^{4}(q,\lambda)$. Here

\begin{equation*}
X_{f}=f_{\lambda_{1}}\partial_{q_{1}}+f_{\lambda_{2}}\partial_{q_{2}}-f_{q_{1}}\partial_{\lambda_{1}}-f_{q_{2}}\partial_{\lambda_{2}}.
\end{equation*}

We need to find two closed 1-forms $\varkappa_{1}$ and $\varkappa_{2}$ dual to restrictions $Z_{1}$ and $Z_{2}$ of vector fields $X_{1}$ and $X_{2}$ on $M$, i.e. $\varkappa_{i}(Z_{j})=\delta_{ij}$, where $\delta_{ij}$ is the Kronecker symbol. On each connected component of $M$ the forms $\varkappa_{1}$ and $\varkappa_{2}$ are exact, i.e. $\varkappa_{i}=d\Omega_{i}$ and functions $\Omega_{i}$ are called \textit{angles}. Expressions for $\Omega_{1}$ and $\Omega_{2}$ are given by the following theorem, which is the result of straightforward computations.

\begin{Theorem}
Angle variables $\Omega_{1}$ and $\Omega_{2}$ are of the form

\begin{equation}
\label{angles}
\Omega_{1}=\pm\int\frac{4H_{1}q_{1}^{2}dq_{1}}{\sqrt{D}},\quad\Omega_{2}=\frac{q_{2}}{q_{1}}\pm\int\frac{n^{2}R\delta(R-H_{2}q_{1})dq_{1}}{q_{1}\sqrt{D}}.
\end{equation}

Hamiltonian system (\ref{hamsys}) is equivalent to

\begin{equation*}
\dot\Omega_{1}=1,\quad\dot\Omega_{2}=0.
\end{equation*}

\end{Theorem}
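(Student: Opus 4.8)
The plan is to obtain both angle functions as potentials of the $1$-forms $\varkappa_{1},\varkappa_{2}$ dual to the restrictions $Z_{1}=X_{H}|_{M}$ and $Z_{2}=X_{G}|_{M}$, and then to read off the evolution law from the identities $\varkappa_{i}(Z_{1})=\delta_{i1}$. First I would fix a connected component of $M$ and use $(q_{1},q_{2})$ as coordinates on it, inserting the expressions $\lambda_{2}=H_{2}/q_{1}$ and $\lambda_{1}=(-2H_{2}R\delta nq_{2}\pm\sqrt{D})/(2Rn\delta q_{1}^{2})$ found above. Since $H$ and $G$ are constant on $M$ and $[G,H]=0$, both $X_{H}$ and $X_{G}$ are tangent to $M$; projecting to the $(q_{1},q_{2})$-plane gives $Z_{j}=H^{(j)}_{\lambda_{1}}\partial_{q_{1}}+H^{(j)}_{\lambda_{2}}\partial_{q_{2}}$ with $H^{(1)}=H$ and $H^{(2)}=G$. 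From $G=q_{1}\lambda_{2}$ one reads off immediately $Z_{2}=q_{1}\partial_{q_{2}}$.

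The decisive step is to simplify $Z_{1}$. Differentiating (\ref{ham}) and then using the level-set relations defining $M$ — in particular $2q_{1}^{2}H_{1}=\sqrt{nR\delta\,(\cdots)}$ with $(\cdots)$ the polynomial under the radical in (\ref{ham}), together with the identity $q_{1}\lambda_{1}+q_{2}\lambda_{2}=\pm\sqrt{D}/(2Rn\delta q_{1})$, which follows by substituting the formulas for $\lambda_{1},\lambda_{2}$ — the partial derivatives of the square root collapse to
\begin{equation*}
H_{\lambda_{1}}=\frac{\pm\sqrt{D}}{4q_{1}^{2}H_{1}},\qquad H_{\lambda_{2}}=\frac{n^{2}R\delta q_{1}(H_{2}q_{1}-R)\pm q_{2}\sqrt{D}}{4q_{1}^{3}H_{1}}.
\end{equation*}
I expect this cancellation to be the main obstacle: it requires using both constraints $H=H_{1}$ and $G=H_{2}$ at once, and it is exactly what makes $\sqrt{D}$ appear in the denominators of (\ref{angles}).

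With $Z_{1},Z_{2}$ in this form the dual forms are determined by the $2\times2$ linear system $\varkappa_{i}(Z_{j})=\delta_{ij}$. The equation $\varkappa_{1}(Z_{2})=0$ forces $\varkappa_{1}$ to have no $dq_{2}$-component, and $\varkappa_{1}(Z_{1})=1$ then gives $\varkappa_{1}=\pm(4H_{1}q_{1}^{2}/\sqrt{D})\,dq_{1}$. Similarly $\varkappa_{2}(Z_{2})=1$ fixes the coefficient of $dq_{2}$ to be $1/q_{1}$, and $\varkappa_{2}(Z_{1})=0$ fixes the coefficient of $dq_{1}$ to be $-q_{2}/q_{1}^{2}\pm n^{2}R\delta(R-H_{2}q_{1})/(q_{1}\sqrt{D})$. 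Both forms are closed: $\varkappa_{1}$ depends on $q_{1}$ only, while for $\varkappa_{2}$ a one-line check gives $\partial_{q_{1}}(1/q_{1})=-1/q_{1}^{2}=\partial_{q_{2}}\bigl(-q_{2}/q_{1}^{2}\pm\cdots\bigr)$, so $d\varkappa_{2}=0$.

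On each connected component of $M$ the Poincaré lemma yields $\varkappa_{i}=d\Omega_{i}$; integrating $\varkappa_{1}$ directly and splitting off the exact part $d(q_{2}/q_{1})$ from $\varkappa_{2}$ reproduces precisely the expressions (\ref{angles}). Finally, since the time evolution of (\ref{hamsys}) is the flow of $X_{H}=X_{1}$, which restricts to $Z_{1}$ on $M$, we obtain $\dot\Omega_{i}=X_{H}(\Omega_{i})=d\Omega_{i}(Z_{1})=\varkappa_{i}(Z_{1})=\delta_{i1}$, that is $\dot\Omega_{1}=1$ and $\dot\Omega_{2}=0$, which is the asserted normal form.
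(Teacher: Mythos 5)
Your proposal is correct and follows exactly the route the paper sets up (the paper itself omits the computation, calling it straightforward): you restrict $X_H$ and $X_G$ to $M$ in the chart $(q_1,q_2)$, use the level-set relations to collapse $H_{\lambda_1}$ and $H_{\lambda_2}$, solve the duality conditions $\varkappa_i(Z_j)=\delta_{ij}$, and integrate the resulting closed forms. I verified the key identities $q_1\lambda_1+q_2\lambda_2=\pm\sqrt{D}/(2Rn\delta q_1)$ and the resulting expressions for $H_{\lambda_1}$, $H_{\lambda_2}$, and both dual forms; they reproduce (\ref{angles}) and the normal form $\dot\Omega_1=1$, $\dot\Omega_2=0$ as claimed.
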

Thus the solution of (\ref{hamsys}) is given as

\begin{equation*}
\Omega_{1}=t+\alpha_{1},\quad\Omega_{2}=\alpha_{2},
\end{equation*}

where constants $\alpha_{1}$ and $\alpha_{2}$ are derived from conditions at the ends. By means of inverse transformation (\ref{invtr}) one can obtain the corresponding solutions in terms of thermodynamic variables $(e,v)$.

\subsection{Real Gases}
Here, we again will look for a process $Y=u_{1}Y_{1}+u_{2}Y_{2}$, where vector fields $Y_{1}$ and $Y_{2}$ are given by (\ref{fieldsreal}). Following the case of ideal gases, we finally get the Hamiltonian $H_{vdW}(q,\lambda)$ in the form

\begin{equation}
\label{Hvdw}
H_{vdW}(q,\lambda)=H(q,\lambda)+aH_{a}(q,\lambda)+bH_{b}(q,\lambda)+\ldots,
\end{equation}

where the first order corrections $H_{a}$ and $H_{b}$ are

\begin{equation*}
H_{a}(q,\lambda)=\frac{e^{q_{2}/q_{1}}\left(q_{1}^{2}(R\delta n^{3}\lambda_{2}^{2}-8H^{2}(q,\lambda))-R^{2}\lambda_{2}n^{3}\delta\right)}{4q_{1}nRH(q,\lambda)},\quad H_{b}(q,\lambda)=\frac{e^{q_{2}/q_{1}}R\delta n^{2}\lambda_{2}(R-\lambda_{2}q_{1}^{2})}{4H(q,\lambda)q_{1}^{2}}.
\end{equation*}

We will restrict ourselves to linear with respect to parameters $a$ and $b$ corrections only.

From now and on, we will assume that all the functions are expressed in terms of angle variables $(\Omega_{1},\Omega_{2})$ given by (\ref{angles}) instead of $(q_{1},q_{2})$. This can be done by resolving (\ref{angles}) with respect to $(q_{1},q_{2})$. In these new coordinates, vector fields $Z_{1}$ and $Z_{2}$ have the form

\begin{equation}
\label{fieldsomega}
Z_{1}=\frac{\partial}{\partial\Omega_{1}},\quad Z_{2}=\frac{\partial}{\partial\Omega_{2}}.
\end{equation}

To integrate the Hamiltonian system with Hamiltonian (\ref{Hvdw}), one needs to find the second commuting integral $G_{vdW}(q,\lambda)$. We will look for that integral in the form

\begin{equation*}
G_{vdW}(\Omega_{1},\Omega_{2})=G(\Omega_{1},\Omega_{2})+aG_{a}(\Omega_{1},\Omega_{2})+bG_{b}(\Omega_{1},\Omega_{2})+\ldots,
\end{equation*}

where functions $G_{a}$ and $G_{b}$ are to be defined. Condition $[H_{vdW},G_{vdW}]=0$ leads us (up to linear terms) to the following equations:

\begin{equation}
\label{correq}
[H_{a},G]=[G_{a},H],\quad [H_{b},G]=[G_{b},H].
\end{equation}

Using a well-known relation $[f,g]=X_{g}(f)$ and (\ref{fieldsomega}), we get system (\ref{correq}) as

\begin{equation*}
\frac{\partial H_{a}}{\partial\Omega_{2}}=\frac{\partial G_{a}}{\partial\Omega_{1}},\quad \frac{\partial H_{b}}{\partial\Omega_{2}}=\frac{\partial G_{b}}{\partial\Omega_{1}}.
\end{equation*}

and finally we obtain

\begin{equation*}
G_{a}=\int\frac{\partial H_{a}}{\partial\Omega_{2}}d\Omega_{1},\quad G_{b}=\int\frac{\partial H_{b}}{\partial\Omega_{2}}d\Omega_{1}.
\end{equation*}

Thus we have got the second integral for the extremal problem commuting with the Hamiltonian up to linear in $a$ and $b$ terms and therefore the Hamiltonian system is integrable in Liouville's sense in this approximation.
\section*{Acknowledgements}
The first author (A.K.) was partially supported by the Russian Foundation for Basic Research (project 18-29-10013), the second and the third authors (V.L. and M.R.) were partially supported by the Foundation for the Advancement of Theoretical Physics and Mathematics ``BASIS'' (project 19-7-1-13-3).

\end{document}